\newcommand{\R}{\mathbb{R}}
\newcommand{\Z}{\mathbb{Z}}
\newcommand{\Su}{\mathcal{S}}
\newcommand{\Vo}{\mathcal{V}}
\newcommand{\Sc}{\mathcal{S}\mathcal{C}}
\newcommand{\PP}{\mathcal{P}}
\newcommand{\cc}{\pmb{c}}
\newcommand{\bb}{\pmb{b}}
\newcommand{\xx}{\pmb{x}}
\newcommand{\uu}{\pmb{u}}
\newcommand{\vv}{\pmb{v}}
\newcommand{\xo}{\pmb{x_0}}
\newcommand{\NN}{\mathcal{N}}
\newtheorem{prop}{Proposition}
\begin{document}
% can use linebreaks \\ within to get better formatting as desired
\title{Constructive spherical codes on layers of flat tori}
%
%
% author names and IEEE memberships
% note positions of commas and nonbreaking spaces ( ~ ) LaTeX will not break
% a structure at a ~ so this keeps an author's name from being broken across
% two lines.
% use \thanks{} to gain access to the first footnote area
% a separate \thanks must be used for each paragraph as LaTeX2e's \thanks
% was not built to handle multiple paragraphs
%

\author{Cristiano Torezzan, Sueli I. R. Costa, Vinay A. Vaishampayan%
\thanks{C. Torezzan is with the School of Applied Sciences, University of Campinas, Brazil (e-mail: cristiano.torezzan@fca.unicamp.br. S.I.R. Costa is with the Institute of Mathematics, University of Campinas, Brazil (e-mail: sueli@ime.unicamp.br. V. A. Vaimshampayan is with AT\&T Shannon Laboratory, 
Florham Park, NJ, USA (e-mail:vinay@research.att.com). 

This work was supported in part by S\~ao Paulo Research Foundation  FAPESP and National Counsel of Technological and Scientific Development - CNPq - Brazil. 

The material in this paper was presented in part at the IEEE International Symposium on Information Theory, Seoul, Korea, 2009.}

}

% The paper headers

%\markboth{IEEE Transactions on Information Theory,~Vol.~?, No.~?, ?????~2011}%
%{Spherical codes on torus layers}

\maketitle

\begin{abstract}
%\boldmath
A new class of spherical codes is constructed by selecting a finite subset of flat tori from a foliation of the unit sphere ${\cal S}^{2L-1}\subset \R^{2L}$ and designing a structured codebook on each torus layer. The resulting spherical code can be the image of a lattice restricted to a specific hyperbox in $\R^L$ in each layer. Group structure and homogeneity, useful for efficient storage and decoding, are inherited from the underlying lattice codebook. A systematic method for constructing such codes are presented and, as an example, the Leech lattice is used to construct a spherical code in $R^{48}$. Upper and lower bounds on the performance, the asymptotic packing density and a method for decoding are derived.
\end{abstract}

% Note that keywords are not normally used for peerreview papers.
\begin{IEEEkeywords}
Spherical codes, group codes, flat torus, lattices, Gaussian channel.
\end{IEEEkeywords}

% For peer review papers, you can put extra information on the cover
% page as needed:
% \ifCLASSOPTIONpeerreview
% \begin{center} \bfseries EDICS Category: 3-BBND \end{center}
% \fi
%
% For peerreview papers, this IEEEtran command inserts a page break and
% creates the second title. It will be ignored for other modes.
%\IEEEpeerreviewmaketitle

\section{Introduction}
% The very first letter is a 2 line initial drop letter followed
% by the rest of the first word in caps.
% 
% form to use if the first word consists of a single letter:
% \IEEEPARstart{A}{demo} file is ....
% 
% form to use if you need the single drop letter followed by
% normal text (unknown if ever used by IEEE):
% \IEEEPARstart{A}{}demo file is ....
% 
% Some journals put the first two words in caps:
% \IEEEPARstart{T}{his demo} file is ....
% 
% Here we have the typical use of a "T" for an initial drop letter
% and "HIS" in caps to complete the first word.
The problem of  placing points on the unit Euclidean sphere of a given dimension has attracted the attention of engineers, mathematicians and scientists and has relevance to many diverse fields of science and engineering. In communication theory, point sets on the unit sphere are useful for communica\-ting over a Gaussian channel and are a natural generalization of phase shift keyed signal sets (PSK) to dimensions greater than two. The point placement problem in this case is formulated as a packing problem in which the objective is to pack as many spherical caps of given radius as possible on the sphere. The dual to the packing problem is the covering problem, useful in facility location, in which the problem is to minimize the number of spherical caps of a given radius so that every point on the sphere is covered. In quantization, point sets on the sphere form a key component in shape-gain vector quantizers \cite{GershoGray}. Point sets on the sphere with special properties are known as spherical $t$-designs. An overview on spherical codes and its properties can be found in  \cite{zino, ConwaySloane} and  two families of asymptotically dense codes are presented in  \cite{Hamkins1, Hamkins2}. Lists of good spherical packings, coverings and designs can be found online at \cite{Sloane:Spherical}.

In this paper, we describe a new method for constructing spherical codes for the communication problem. While it is important to maximize the packing density, additional practical considerations such as storage and easy decoding are also important. The codes presented here have low construction and decoding complexity and, for not asymptotically small distances, have comparable performance to some well known apple peeling  \cite{Gamal}, wrapped  \cite{Hamkins1} and laminated  \cite{Hamkins2} codes.

The paper is organized as follows. The flat tori foliation of the sphere is introduced in Sec. \ref{sec:FT}. Sec. \ref{sec:TLSC} describes our proposal of \textit{Torus Layes Spherical Codes (TLSC)} which is based on a foliation of the unit sphere in $\R^{2L}$. In Sec.  \ref{TLSC4} an example of TLSC in $R^4$, which is cyclic in each layer is presented with comparisons to some well known constructions. In Sec.  \ref{group} it is described how lattices with good packing density in $R^L$ can be used to construct TLSC in $R^{2L}$. As an example the Leech Lattice is used to construct a spherical code in $R^{48}$. Upper and lower bounds on the performance and asymptotic packing density of the TLSC are derived Sec. \ref{sec:Bounds}. Finally a decoding method for our codes is described in Sec.  \ref{sec:Decoding}.

\section{Foliation of the Sphere by Flat Tori}
\label{sec:FT}

The unit sphere $S^{2L-1} \subset \R^{2L} $ can be foliated with flat tori (Clifford Tori) \cite{BergerGostiaux}, \cite{CMAP:2004} as follows. For each unit {\textit{radius} vector $\cc = (c_{1},c_{2},..,c_{L}) \in \R^{L}, c_i > 0$, and $\pmb{u}=(u_1,u_2,\ldots,u_L) \in \R^L$, let $\Phi_{\cc}:\R^L \rightarrow \R^{2L}$ be defined by
\begin{equation}
\small{
\Phi_{\cc} (\pmb u)=\left(c_{1}\left(\cos \frac{u_{1}%
}{c_{1}},\sin \frac{u_{1}}{c_{1}}\right),\dots,c_{L}\left(\cos \frac{u_{L}}{c_{L}},\sin \frac{u_{L}}{c_{L}}\right)\right).
}
\label{phi}
\end{equation}
The image of $\Phi _{\cc}$ is the torus $ T_{\cc}$, a flat $L$-dimensional surface on the unit sphere $S^{2L-1}$.  $\Phi_{\cc}$ is an embedding of the flat torus $T_{\pmb{c}}
$, generated by the hyperbox:
\begin{equation}
\label{para}
\PP_{\cc} = \{\uu \in \R^{L}; 0 \leq u_{i} < 2 \pi c_{i}\}, \ \ 1\leq i\leq L.
\end{equation}

Note also that each vector of $S^{2L-1}$ belongs to one, and only one, of these flat tori, some of which may be degenerate\footnote{A degenerate torus $T_{\cc}$ is one for which some $c_i=0$, $i=1,2,\ldots,L$.}. 

\begin{figure}[h!]
	\centering
		\includegraphics[scale=0.35]{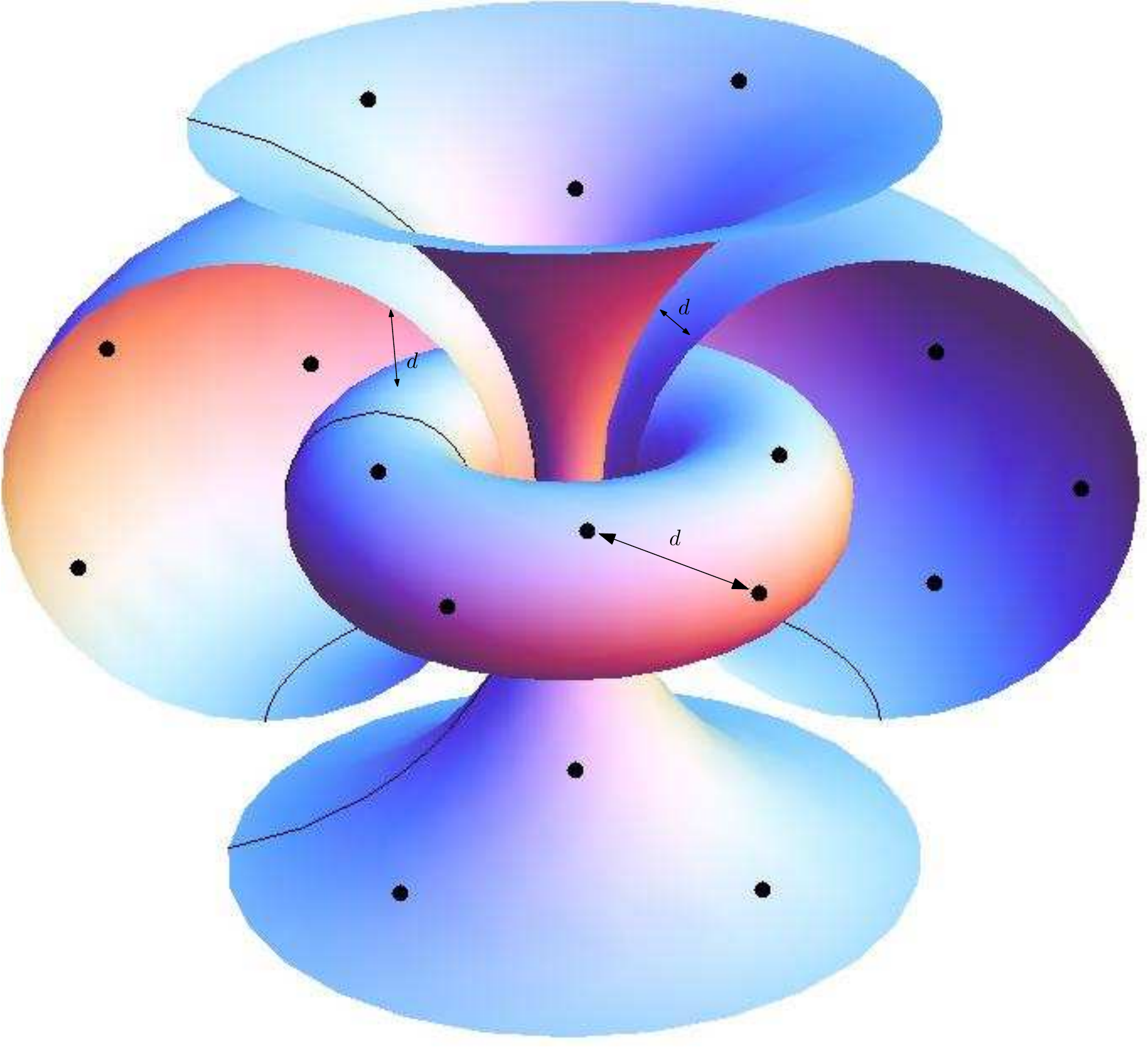}
		\caption{Ilustration of a torus layer spherical code in dimension four, projected in $\R^3$. The codewords belong to the surface of flat tori.}
  	 \label{fig:decombp}
\end{figure}

%$\xx = \{x_1, x_2, \cdots, x_L \}$ of $S^{2L-1}$ can be written as 
%\begin{eqnarray}
%\xx & = &  \left( \gamma_1 \left( \frac{x_1}{\gamma_1}, \frac{x_2}{\gamma_1} \right),  \hdots , \gamma_k \left( \frac{x_{2L-1}}{\gamma_L}, \frac{x_{2L}}{\gamma_L} \right)
% \right) = \nonumber \\
% & = &  \left( \gamma_1\left(\cos{\frac{\theta_{1}}{\gamma_1}}, \sin{\frac{\theta_{1}}{\gamma_1}} \right), \hdots ,
%\gamma_L \left(\cos{\frac{\theta_{L}}{\gamma_L}},\sin{ \frac{\theta_{L}}{\gamma_L}} \right) \right). \nonumber 
% \end{eqnarray}
%where,
% \begin{eqnarray}
% \gamma_i   & = & \sqrt{x_{2i-1}^2+x_{2i}^2}, \ \ 1 \leq i \leq L \nonumber \\
% \theta_{i} & = & \arccos{ \left( \dfrac{x_{2i-1}}{\gamma_i} \right) } \gamma_i,  \ \ 1 \leq i \leq L \nonumber.
% \end{eqnarray}
%this means that $\xx$ 

The Gaussian curvature of a torus $T_{\cc}$ is zero   \cite{BergerGostiaux} and $T_{\cc}$ can be cut and flattened into an $L$-dimensional box, $\PP_{\cc}$, just as a cylinder in $\R^3$ can be cut and flattened into a $2$-dimensional rectangle. Since the inner product $\langle \partial \Phi_{\cc}/\partial u_i, \partial \Phi_{\cc}/\partial u_j \rangle=\delta_{ij}$, the application $\Phi_{\cc}$ is a local isometry, which means that any measure of length, area and $L$-dimensional volume on $T_{\cc}$ is the same of the corresponding pre-image in the $L$-dimensional hyperbox $\PP_{\cc}$ (\ref{para}).

We say that the family of flat tori $T_{\cc}$ and their degenerations, with $\cc = (c_{1},c_{2},..,c_{L})$, $ \left\|  \cc  \right\|  =1$, $c_{i}  \geq 0$, defined above is a foliation on the unit sphere of $S^{2L-1}\subset \R^{2L}.$

Let $T_{\bb}$ and $T_{\cc}$ be two flat tori, defined by unit vectors $\bb$ and $\cc$ with non vanishing coordinates. We can assert that:
\begin{prop}
\label{prop1}
The minimum distance between two points on these flat tori is 
\begin{equation}
d(T_{\cc},T_{\bb})= \left\|  \cc- \bb \right\| = \left( \sum_{i=1}^L (c_i - b_i)^2\right)^{1/2}.
\end{equation}
\end{prop}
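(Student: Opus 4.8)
The plan is to compute the squared Euclidean distance between two arbitrary points $\Phi_{\cc}(\uu)$ and $\Phi_{\bb}(\vv)$ and to show that it decouples into a sum of $L$ independent single-angle terms, each minimizable on its own. First I would substitute the parametrization (\ref{phi}): writing $\theta_i = u_i/c_i$ and $\phi_i = v_i/b_i$, the contribution of the $i$-th coordinate pair to $\|\Phi_{\cc}(\uu) - \Phi_{\bb}(\vv)\|^2$ is
\[
(c_i\cos\theta_i - b_i\cos\phi_i)^2 + (c_i\sin\theta_i - b_i\sin\phi_i)^2.
\]
Expanding and collecting terms, the cross products combine through $\cos\theta_i\cos\phi_i + \sin\theta_i\sin\phi_i = \cos(\theta_i - \phi_i)$, so this contribution equals $c_i^2 + b_i^2 - 2c_ib_i\cos(\theta_i - \phi_i)$.

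Summing over $i$ gives
\[
\|\Phi_{\cc}(\uu) - \Phi_{\bb}(\vv)\|^2 = \sum_{i=1}^L\bigl(c_i^2 + b_i^2 - 2c_ib_i\cos(\theta_i-\phi_i)\bigr),
\]
which is the key structural observation: the distance depends only on the $L$ angular differences $\theta_i - \phi_i$, and these can be varied \emph{independently} because each torus is a product of circles whose parameters $u_i, v_i$ range freely over $\R$. Since $c_i, b_i > 0$ by hypothesis, every coefficient $2c_ib_i$ is strictly positive, so the $i$-th summand is minimized exactly when $\cos(\theta_i - \phi_i) = 1$, i.e. $\theta_i \equiv \phi_i \pmod{2\pi}$, yielding the minimum value $c_i^2 + b_i^2 - 2c_ib_i = (c_i - b_i)^2$.

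Minimizing term by term then delivers
\[
\min \|\Phi_{\cc}(\uu) - \Phi_{\bb}(\vv)\|^2 = \sum_{i=1}^L (c_i - b_i)^2 = \|\cc - \bb\|^2,
\]
and taking square roots closes the argument. I do not expect a genuine obstacle: the only point demanding care is the independence of the $L$ minimizations, which rests on the product structure of the tori together with the assumption that all coordinates are non-vanishing, so that each angle $\theta_i, \phi_i$ actually sweeps a full circle and the value $\cos(\theta_i - \phi_i) = 1$ is attainable. One should also remark that the componentwise minimizers can be assembled into an honest pair of points (choosing, for instance, $u_i = c_i\theta_i$ and $v_i = b_i\theta_i$ with $\theta_i$ common), so the infimum is realized and is a true minimum.
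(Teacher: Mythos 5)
Your proof is correct and follows essentially the same route as the paper: the paper's proof consists of exactly the observation you make explicit, namely that $\|\Phi_{\cc}(\uu)-\Phi_{\bb}(\vv)\|^2 \geq \|\cc-\bb\|^2$ with equality precisely when the angles match, i.e. $u_i/c_i = v_i/b_i$ for all $i$. You have simply written out the algebra (the coordinate-wise decomposition into $c_i^2+b_i^2-2c_ib_i\cos(\theta_i-\phi_i)$) and the attainability check that the paper leaves implicit.
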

\begin{proof}
This follows from observing that
\begin{equation}
D^2:=\|\Phi_{\cc}(\pmb{u})-\Phi_{\cc}(\pmb{v})\|^2 \geq \|\cc-\bb\|^2
\end{equation}
%The proof follows by expressing the square distance 
%$$
%D   =d^{2}(\Phi_{(c_{1},...,c_{L})}(u_{1},u_{2},...,u_{L}),\Phi
%_{(b_{1},...,b_{L})}(v_{1},v_{2},...,v_{L}))
%$$
%and verifying that 
%$$
%D\geq \left\|  (c_{1},...,c_{L})-(b_{1},...,b_{L})\right\|^2,
%$$
with equality holding if, and only if,
$$
\frac{u_{i}}{c_{i}} = \frac{v_{i}}{b_{i}}, \ \ \forall i%
$$
\end{proof}

Note that Proposition \ref{prop1} can also be extended to degenerate tori by replacing $ c_{i} \left(\cos \frac{u_{i}}{c_{i}},\sin \frac{u_{i}}{c_{i}} \right)$ by  $(0,0)$ if $c_i=0$ in  \ref{phi}. 

The distance between two points on the same torus $T_{\pmb{c}}$ given by
\begin{equation*}
d(\Phi_{\pmb{c}}(\pmb{u}),\Phi_{\pmb{c}}(\pmb{v}))=2\sqrt{\sum c_{i}^{2}\sin^{2}(\frac{u_{i}-v_{i}%
}{2c_{i}})}\label{SameTorus}%
\end{equation*}
is bounded in terms of $\|\pmb{u}-\pmb{v}\|$ by the following proposition.
%and we can also establish relations between the distance between two points in $\R^{L}$ and their images by $\Phi_{\pmb{c}}:$
\begin{prop}
\label{prop2}
Let $\pmb{c=}(c_{1},c_{2},..,c_{L})$, $ \left\|  \cc  \right\|  =1$,
and let  $\pmb{u}, \pmb{v} \in \PP_{\cc}$. Let $\Delta =  \left\|\pmb{u}-\pmb{v}\right\|$ and $\delta= \left\|\Phi_{\pmb{c}}(\pmb{u})- \Phi_{\pmb{c}}(\pmb{v}) \right\|$.
Then
$$
\displaystyle 
\frac{2}{\pi}\Delta\leq\dfrac{\sin\frac{\Delta}{2c_{\xi}}}{\frac{\Delta}{2c_{\xi}%
}}\Delta\leq\delta\leq\dfrac{\sin\frac{\Delta}{2}}{\frac{\Delta}{2}}\Delta
\leq\Delta
$$
where $\displaystyle \xi = \arg \min_i (c_i)$.
\\
\end{prop}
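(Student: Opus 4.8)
The plan is to collapse the whole chain onto a single fact: the function $g(x)=\frac{\sin x}{x}$ decreases on $[0,\pi]$, running from $g(0^+)=1$ through $g(\pi/2)=\frac{2}{\pi}$ down to $g(\pi)=0$. I would establish this first by writing $g'(x)=\frac{x\cos x-\sin x}{x^2}$ and noting that the numerator $h(x)=x\cos x-\sin x$ satisfies $h(0)=0$ and $h'(x)=-x\sin x\le 0$ on $[0,\pi]$, so $h\le 0$ and $g$ is decreasing there. Every inequality in the statement will be a consequence of this one monotonicity property, used in two different guises.

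Next I would fix notation: put $w_i=u_i-v_i$, $a_i=\frac{|w_i|}{2c_i}$, and $b_i=c_i^2$, so $\sum_i b_i=\|\cc\|^2=1$. The same-torus distance formula quoted just above the proposition becomes $\delta^2=4\sum_i b_i\sin^2 a_i$, while $\Delta^2=\sum_i w_i^2=4\sum_i b_i a_i^2$. Hence $\frac{\Delta}{2}=\sqrt{\sum_i b_i a_i^2}$ is the weighted quadratic mean of the $a_i$, and $\frac{\delta}{2}=\sqrt{\sum_i b_i\sin^2 a_i}$. I would carry out the argument in the range where every $a_i\le\frac{\pi}{2}$ (the minimal-arc representatives, which is the natural regime for nearby codewords); note that this forces $\frac{\Delta}{2}\le\frac{\pi}{2}$ as well, and that the leftmost inequality additionally needs $\frac{\Delta}{2c_\xi}\le\frac{\pi}{2}$, i.e. $\Delta\le\pi c_\xi$, so that $g(\frac{\Delta}{2c_\xi})\ge\frac{2}{\pi}$.

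The lower and the two outer bounds then follow from monotonicity alone. Since $4c_i^2\sin^2 a_i=w_i^2\, g(a_i)^2$, we may write $\frac{\delta^2}{\Delta^2}=\sum_i\lambda_i\, g(a_i)^2$ with $\lambda_i=w_i^2/\Delta^2$ a convex combination. From $c_\xi\le c_i$ and $|w_i|\le\Delta$ we get $a_i\le\frac{\Delta}{2c_\xi}$, so $g(a_i)\ge g(\frac{\Delta}{2c_\xi})\ge 0$ and therefore $g(a_i)^2\ge g(\frac{\Delta}{2c_\xi})^2$ for each $i$; averaging gives $\delta\ge g(\frac{\Delta}{2c_\xi})\Delta$, and the leftmost inequality is just $g(\frac{\Delta}{2c_\xi})\ge\frac{2}{\pi}$. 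The rightmost bound $g(\frac{\Delta}{2})\le 1$ is immediate from $|\sin x|\le|x|$.

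The crux is the remaining upper bound $\delta\le g(\frac{\Delta}{2})\Delta=2\sin\frac{\Delta}{2}$, equivalently $\sum_i b_i\sin^2 a_i\le\sin^2\!\big(\sqrt{\sum_i b_i a_i^2}\big)$. The coordinatewise monotonicity trick fails here, because this is an assertion about the quadratic mean rather than the extreme values of the $a_i$. I would instead read it as Jensen's inequality applied to $F(t)=\sin^2\sqrt{t}$ at the points $t_i=a_i^2$: the inequality $\sum_i b_i F(a_i^2)\le F(\sum_i b_i a_i^2)$ is exactly what is wanted, provided $F$ is concave on $[0,(\pi/2)^2]$. Concavity is where the monotonicity of $g$ reappears: $F'(t)=\frac{\sin(2\sqrt t)}{2\sqrt t}=g(2\sqrt t)$, and since $t\mapsto 2\sqrt t$ is increasing while $g$ decreases on $[0,\pi]$, $F'$ is decreasing for $0\le t\le(\pi/2)^2$, so $F$ is concave precisely on the range $a_i\le\frac{\pi}{2}$ fixed earlier. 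I expect this concavity step, together with the bookkeeping that keeps all arguments inside $[0,\frac{\pi}{2}]$ so that $g\ge 0$ and $F$ is concave, to be the main obstacle; the rest is routine manipulation of $\frac{\sin x}{x}$.
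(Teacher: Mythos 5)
Your proof is correct, and it is genuinely different from (and more self-contained than) the paper's. The paper disposes of the proposition in two lines by citing the analogous result in the curves paper of Vaishampayan--Costa and asserting that, for fixed $\Delta$, the extremal distortions occur at $\pmb{u}-\pmb{v}=\Delta\cc$ (least bending, giving the upper bound $2\sin\frac{\Delta}{2}$) and $\pmb{u}-\pmb{v}=\Delta\pmb{e}_\xi$ (most bending, giving the lower bound); it never verifies that these are in fact the extremizers. You instead prove the two middle inequalities directly: the lower bound by the coordinatewise monotonicity of $g(x)=\frac{\sin x}{x}$ applied to the convex combination $\frac{\delta^2}{\Delta^2}=\sum_i\lambda_i g(a_i)^2$, and the upper bound by Jensen's inequality for the concave function $F(t)=\sin^2\sqrt{t}$ with weights $c_i^2$ --- which is precisely the substance hidden in the paper's claim that $\pmb{u}-\pmb{v}=\Delta\cc$ maximizes $\delta$. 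This buys you a complete, reference-free argument, and it also surfaces a hypothesis the paper leaves implicit: since $u_i-v_i$ can range up to $2\pi c_i$ inside $\PP_{\cc}$, the lower bounds can fail for antipodal-type pairs, so the restriction to minimal-arc representatives ($|u_i-v_i|\le\pi c_i$), and the condition $\Delta\le\pi c_\xi$ for the leftmost inequality $g\bigl(\frac{\Delta}{2c_\xi}\bigr)\ge\frac{2}{\pi}$, are genuinely needed; you correctly flag both. The one point to tighten is bookkeeping: your lower-bound step squares $g(a_i)\ge g\bigl(\frac{\Delta}{2c_\xi}\bigr)$, which is only legitimate because you have already confined $\frac{\Delta}{2c_\xi}$ to $[0,\frac{\pi}{2}]$ where $g\ge 0$; you say this, but it deserves to be stated as a standing hypothesis of the proposition rather than a parenthetical.
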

\begin{proof} This proposition is an extension
of a result presented in \cite{Cos2003} and its proof follows similar arguments. We can show that for fixed $\Delta$, the mi\-nimum and the
maximum distortion, which correspond to maximum and minimum bending, occur
whenever \linebreak $\pmb{u}- \pmb{v}= \Delta \cc $ and
\ $\pmb{u} - \pmb{v}= \Delta \pmb{e}_{\xi}$ respectively, where $\pmb{e}_j$ denotes the canonical unit basis vector which is nonzero only along the $j$th coordinate.
\end{proof}

\section{Torus Layer Spherical Codes in $\R^{2L}$}
\label{sec:TLSC}

\begin{figure}[h!]
	\centering
		\includegraphics[scale=0.6]{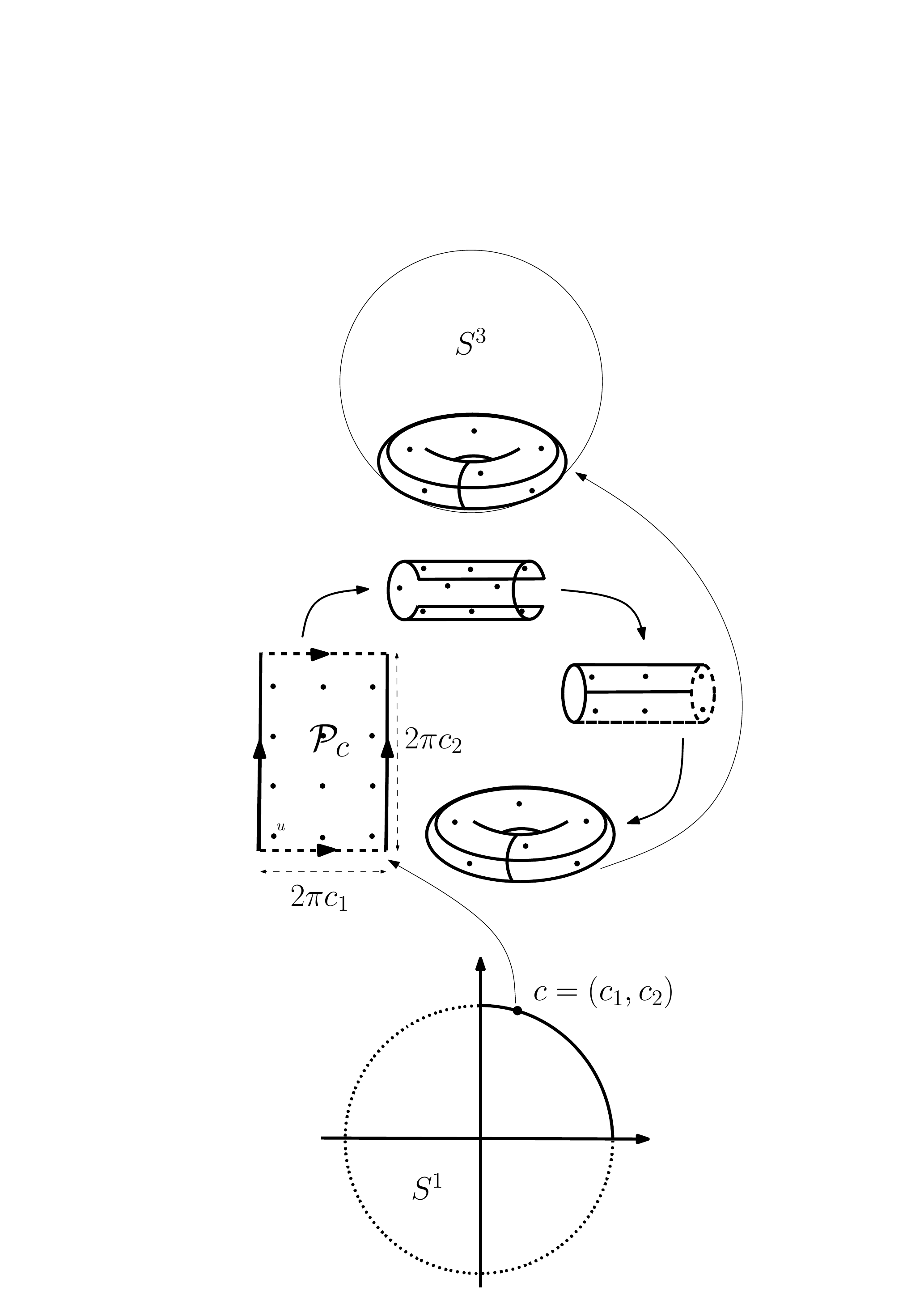}
		\caption{Ilustration of the construction of a four dimensional torus layer spherical code.}
  	 \label{fig:decombp}
\end{figure}

Our goal is to construct a spherical code in $\R^{2L}$ with minimum distance equal to a given value $d$. We denote such a code by $TLSC(2L,d)$.

Before presenting a formal construction technique, we describe the main idea. Given a distance $d \in (0,\sqrt{2}]$, we firstly define a finite set of tori on $S^{2L-1}$ such that the minimum distance, according to Proposition \ref{prop1}, between any two of these tori is greater than $d$. Then, for each one of these tori, a finite set of points is chosen in $\R^L$ such that the distance between any two points, when embedded in $\R^{2L}$ by the standard parametrization (\ref{phi}), is greater than $d$. This set of points may belong to a $L-$dimensional lattice, restricted to the hyperbox (\ref{para}) or to any other suitably chosen set. The $TLSC(2L,d)$ is the union of the images under (\ref{phi}) of each finite sets of points, one for each torus. Figure \ref{fig:decombp} illustrates the construction of a $TLSC(4,d)$.

Note that the pre-image of the points in a single layer of $TLSC(2L,d)$ lie inside an  $L-$dimensional box and hence we are working in half of the code dimension. For not that small values of $d$, our codes compare favorably in terms of code size with previous codes \cite{Gamal}, \cite{Hamkins1} and \cite{Hamkins2} (see Tables \ref{TL5} and \ref{TL4}). In addition, the group structure of our code in each layer allows efficient storage and decoding.

\subsection{The construction of Torus Layer Spherical Codes}

Let $L \geq 2$ and $d \in (0,\sqrt{2}]$. Let $SC(L,d)$ be an $L-$dimensional spherical code with minimum distance greater than $d$. The code $TLSC(2L,d)$ is constructed in two steps as follows:
\begin{itemize}
	\item[(i)] Select the points in the $SC(L,d)$ which have only nonnegative coordinates. This sub-code is denoted by $$SC(L,d)_+ = \left\{c  \in SC(L,d), c_i \geq 0, \ \ 1 \leq i \leq L  \right\}.$$
	Each point $\cc \in SC(L,d)_+$ defines a flat torus $T_{\cc}$ in the unit sphere in $\R^{2L}$ and hence a hyperbox $\PP_{\cc}$, according to ( \ref{para}). 
  \item[(ii)] For each torus $T_{\cc}$, defined by $SC(L,d)_+$, look for the largest set of points  $Y_{T_{\cc}} \subset \PP_{\cc}$ such that $$\left\| \Phi_{\cc}(\pmb{y}) - \Phi_{\cc}(\pmb{x}) \right\| \geq d \ \ \forall \pmb{x}, \pmb{y} \in Y_{T_{\cc}}.$$
\end{itemize}

The performance of a $TLSC$ is directly related to the methods used for constructing $SC(L,d)_+$ and $Y_{T_{\cc}}$. In (i) we must choose a $SC(L,d)_+$ with good density and, if possible, some symmetries. For this purpose we may use any suitable spherical code, e.g., a $L$-dimensional $TLSC$, or some other known structured spherical codes, such as wrapped \cite{Hamkins1}, laminated \cite{Hamkins2} or apple peeling \cite{Gamal} spherical codes. We could also use a non structured spherical code, e.g. one of the  codes listed at \cite{Sloane:Spherical}. Since the cardinality of the set $SC(L,d)_+$ is much smaller than the final code, unstructured spherical codes are also acceptable. For (ii) a good option is to consider lattice points inside each hyperbox $\PP_{\cc}$. Through the maps $\Phi_{\cc}$ they generate group codes in each torus layer \cite{SIQ}. In the next Sections we present examples of this construction.

A \textit{two step-}$TLSC(2L+1,d)$ can also be constructed in odd dimensions by first slicing the unit sphere $S^{2L} \subset \R^{2L+1}$ through  hyperplanes perpendicular to the canonical vector $e_{2L+1}$, such that the minimum distance between two hyperplanes is at least $d$. Table \ref{TL5} shows a comparison between $TLSC(5,d)$ and the apple peeling codes presented in \cite{Gamal} for the same distance.\\

\section{A piecewise cyclic four dimensional TLSC}
\label{TLSC4}

In order to clarify the technique and present an example, we construct a torus layer spherical code $TLSC(4,d)$. For each layer we will design a cyclic group code  so that the resulting code will be a \textit{piecewise cyclic} spherical code.

Step (i) in this construction is to choose a good spherical code in $\R^2$. Since the best spherical code in $\R^2$ with minimum distance $d$ is unique (up to rotation) and is symmetric (the code is the set of vertices of a regular polygon inscribed in the unit circle), the only design choice is to determine  a good rotation for  points in the positive quadrant. 

Our approach is to select points located symmetrically in relation to the line with unit slope.
Thus
\begin{equation}
\label{SC2d}
SC(2,d)_+ = \left\{(\cos(\alpha_{\pm j}), \sin(\alpha_{\pm j})), 0 \leq \alpha_{\pm j} \leq \frac{\pi}{2} \right\},
\end{equation}
where
\begin{eqnarray}
\label{eqalpha}
\lefteqn{ \alpha_{\pm j} = \frac{\pi}{4}  \pm (2j-1)\arcsin{\left( \frac{d}{2} \right),}}   \nonumber  \\
    &  1 \leq\ j \leq \left\lfloor \frac{\pi- 2 \arcsin{(d/2)}}{8 \arcsin{(d/2)}}\right\rfloor .
 \end{eqnarray}
%\begin{figure}[h!]
%	\centering
%		\includegraphics[width=5cm]{nova}
%		\caption{Fig M-psk in $\pi/4$}
%	\label{fig:empty}
%\end{figure}

\begin{figure}[htbf]
	\centering
		\includegraphics[width=7.9cm]{nova}
	\caption{$SC(2,d)_+ $ symmetric in relation to $y=x$  }
	\label{fig: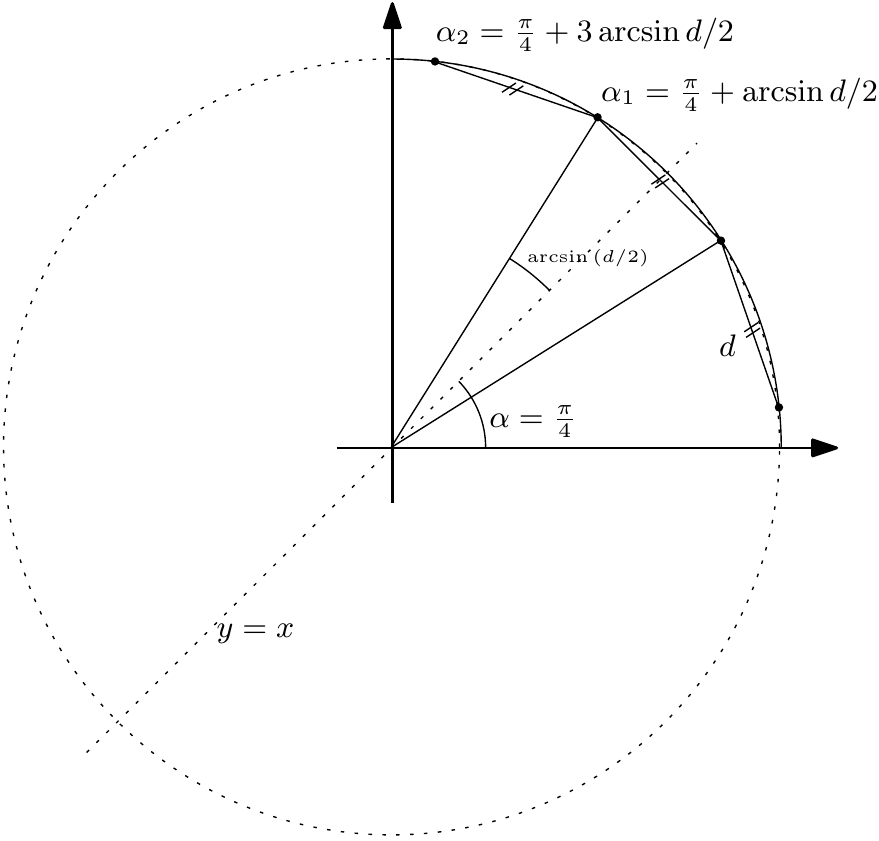}
\end{figure}
Note that each flattened torus $T_{\alpha_{\pm j}}$ is a rectangle with sides of length $2 \pi \cos(\alpha_{\pm j})$ and $2 \pi \sin(\alpha_{\pm j})$.

In step (ii), we fill each torus $T_{\alpha_{\pm j}}$ searching for the largest cyclic group code, with minimum distance $d$ and initial vector \linebreak $x_{0_{\pm j}} = ( \cos(\alpha_{\pm j}),0,\sin(\alpha_{\pm j}),0)$. 

A cyclic group code of order $M$ in $R^4$ is an orbit of an initial unit vector under action of the cyclic group of orthogonal matrices generated by a  $(2 \times 2)$-block diagonal matrix  \begin{eqnarray*}
\lefteqn{G_j= }  \\ 
&  \left( 
  \begin{array} {cccc} \cos(\frac{2\pi g_{j_1}}{M}) &  \sin(\frac{2\pi g_{j_1}}{M}) & 0 & 0 \\
   -\sin(\frac{2\pi g_{j_1}}{M}) & \cos(\frac{2\pi g_{j_1}}{M})                     & 0 & 0 \\
    0 & 0 & \cos(\frac{2\pi g_{j_2}}{M}) &  \sin(\frac{2\pi g_{j_2}}{M}) \\
    0 & 0 & -\sin(\frac{2\pi g_{j_2}}{M}) & \cos(\frac{2\pi g_{j_2}}{M})
  \end{array} \right),
\end{eqnarray*}
where $\gcd(g_{j_1},g_{j_2})=1$. The pre-image of this cyclic group code by $\Phi_{\cc}$ is a lattice in $\R^2$ \cite{SIQ}.

The search for the largest group code for each torus  $T_{\alpha_j}$, attained by an $L$-dimensional lattice can be accomplished based on \cite{Torezzan} and a simplified algorithm is described here.
\begin{algorithm}[htbf]
%\SetLine
\KwIn{$d$, $\alpha$}
\KwOut{Generators: $\{g_{j_1},g_{j_2}\}$}

$x_{0} = ( \cos(\alpha),0,\sin(\alpha),0)$;\\

$\displaystyle M = \left\lfloor  \frac{\pi^2 \cos(\alpha) \sin(\alpha) }{2 \sqrt{3} \arcsin{\left(\frac{d}{4}\right)}^2} \right\rfloor$;\\

$continue = 1;$

\While{continue}{
\For{$g_{j_1} = 1$ to $\left\lfloor \frac{M}{2}\right\rfloor$}{
\For{$g_{j_2} = 1$ to $\left\lfloor \frac{M}{2}\right\rfloor$}{
\If{$\gcd(g_{j_1},g_{j_2})=1$}{
$\displaystyle \bar{d} =  \min_{{\tiny 1 \leq i \leq \left\lfloor \frac{M}{2}\right\rfloor}} \left\|(G_j)^i x_0 - x_0\right\|$;

\If{$\bar{d} \geq d$}{
Return $\{g_{j_1},g_{j_2}\}$;

$continue = 0;$

\textbf{Stop};
}
}
}
}
$M = M - 1$;
}
\caption{Algorithm to search for the best cyclic group code in  $R^4$ for a given initial vector $x_{0_j}$.}
\label{alg:mine}
\end{algorithm}

This problem is, loosely, a dual to the initial vector problem (IVP) \cite{Slepian}, \cite{BiglieriElia}, which is a classic problem  in group codes. In IVP is given a group and the problem is to find a unit vector in order to maximize the minimum distance in its orbit. Here we have the initial vector $x_{0_{\pm j}}$ (which defines the torus $T_{\alpha_{\pm j}}$) and wish to find the largest cyclic group such that the minimum distance in its orbit is, at least, a previously fixed value $d$.

\subsection{An example: $TLSC(4,0.3)$}
We now illustrate the construction of a quasi-cyclic torus layer spherical code for $n = 4$ and $d=0.3$.

\begin{itemize}
	\item From \ref{eqalpha} we get $ \alpha_{+1} = 0.935966$, $ \alpha_{+2} = 1.2371$, and \linebreak $ \alpha_{+3} = 1.53824$, which define the points of $SC(L,d)_+$ above the line $y=x$, according (\ref{SC2d}).
	\item For each torus {$T_{\alpha_{+j}}$}, we have found the largest cyclic group code using the algorithm \ref{alg:mine}. The result is:\\
	
\begin{table}[h!]
	\footnotesize
	\centering
\begin{tabular}{|c|c|c|c|c|c|}
 \text{$\alpha$} & $\cos(\alpha)$ & $\sin(\alpha)$ & \text{dmin} & M & $(g_{i1},g_{i2})$ \\
 0.935966  & 0.593041 & 0.805173 & 0.30225 	& 233 	& \{1,98\} \\
 1.237103  & 0.327535 & 0.944839 & 0.301406 & 146 & \{22,1\} \\
 1.538240  & 0.032551 & 0.99947  & 0.312869 & 20 	& \{0,1\}
\end{tabular}
 	\caption{Part 1 of $TLSC(4,0.3)$: tori above the slope line.}
	\label{tab_cod1}
\end{table}
	\item Finally, for each torus {$T_{\alpha_{+j}}$} we consider the symmetric layer {$T_{\alpha_{-j}}$}, just interchanging the coordinates.
\begin{table}[htb]
	\footnotesize
	\centering
\begin{tabular}{|c|c|c|c|c|c|}
 \text{$\alpha$} & $\sin(\alpha)$ & $\cos(\alpha)$ & \text{dmin} & M & \text{Generator} \\
 0.634829  & 0.805173 & 0.593041 & 0.30225  &  233 & \{98,1\} \\
 0.333694  & 0.944839 & 0.327535 & 0.301406  & 146 & \{1,22\} \\
 0.032559  & 0.99947  & 0.032551 & 0.312869   & 20 & \{1,0\}
\end{tabular}
 	\caption{Part 2 of $TLSC(4,0.3)$: tori below the slope line.}
	\label{tab_cod2}
\end{table}\\
\end{itemize}

The resulting code $TLSC(4,0.3)$ has 6 layers, pairwise symmetric, with 
$20,146,233,233,146,20$ points respectively and thus the entire code has $798$ points. 

Due to the symmetry and group structure of this code, in order to store all $798$ codewords in this $TLSC(4,0.3)$ is only required columns $1, 5$ and $6$ of table \ref{tab_cod1}. The constructiveness of the codewords is a good aspect of the Torus Layers Spherical Codes when compared with several other known construction of spherical codes.

In the Table \ref{TL4}, we compare torus layer spherical codes to three other known spherical codes: apple-peeling \cite{Gamal}, wrapped \cite{Hamkins1} and laminated  \cite{Hamkins2}, at various minimum distances $d$. 

\begin{table}[htb]
	\small
	\centering
			\begin{tabular}{|c|c|c|c|c|}
 \hline d & \text{TLSC(4,d)} & \text{apple}-\text{peeling} & \text{wrapped} & \text{laminated}\\
\hline
  \hline 0.5   & 172 & 136  &  * &  *  \\
  \hline 0.4   & 308 & 268 & *  &   * \\
  \hline 0.3   & 798 & 676  & *  & *  \\
  \hline 0.2   & 2,718 & 2,348  & *  & * \\
  \hline 0.1   & 22,406 & 19,364 & 17,198 & 16,976 \\
  \hline 0.01  & 2.27 $\times 10^{7}$ & 1.97 $\times 10^{7}$ & 2.31 $\times 10^{7}$ & 2.31 $\times 10^{7}$ \\
 \hline
		\end{tabular}
 	\caption{Four-dimensional code sizes at various minimum distances. (*): unknown values.}
	\label{TL4}
\end{table}

Using four dimensional codes and successive slices of the $S^{4}\subset \R^{5}$ by hiperplanes we constructed some TLSC(5,d). Although $S^4$ is not foliated by flat tori, the codes constructed in layers of tori in $S^3$ and lifted to $S^4$ exhibit good performance as illustred in Table \ref{TL5}.

\begin{table}[htb]
	\small
	\centering
			\begin{tabular}{|c|c|c|}
			\hline \text {d} &\text {TLSC(5,d)} &\text {APC(5,d)} \\
			 \hline 0.8 & 48 & 48 \\
			 \hline 0.7 & 98 & 64 \\ 
			 \hline 0.6 & 196 & 160 \\ 
			 \hline 0.5 & 374 & 336 \\ 
			 \hline 0.4 & 872 & 872 \\ 
			 \hline 0.3 & 3,232 & 2,960 \\ 
			 \hline 0.2 & 17,140 & 15,424 \\ 
			 \hline 0.1 & 296,426  & 256,760 \\ 
			 \hline 0.05& 4,824,018  & 4,164,152 \\
			 \hline
		\end{tabular}
 	\caption{Comparison between $5-$dimensional torus layer and apple peeling spherical codes at various minimum distances}
	\label{TL5}
\end{table}

\section{Orthogonal sublattices and piecewise commutative group codes}
\label{group}

In this section we discuss how dense lattices in $R^L$ with good packing density can be used to construct a well structured TLSC in $R^{2L}$. This is done by considering orthogonal $L$-dimensional sublattices in step (ii) of our construction. 

We first complete the step (i) of our construction for a given minimum distance $d$ where each point $\cc \in SC(d,L)_+ \subset S^{L-1}$ defines precisely the lengths of an orthogonal hyperbox \linebreak $\PP_{\cc} \subset \R^L$. For a chosen $L$-dimensional lattice $\Lambda$, we look for an orthogonal sublattice $\Lambda_1$ such that the fundamental region $F$ of $\Lambda_1$ approaches $\PP_{\cc}$. Then $\Lambda_1$ should be scaled to a lattice $\tilde{\Lambda_1}$ in order to get $\PP_{\cc}$ as its fundamental region.

The lattice points of $\tilde{\Lambda_1}$ inside $\PP_{\cc}$ are identified with the quotient $\tilde{\Lambda}/\tilde{\Lambda_1}$ and inherit the associated group structure, since $\tilde{\Lambda}/\tilde{\Lambda_1}$ is isomorphic to $\Lambda/\Lambda_1$. The image by $\Phi_{\cc}$ of these  points defines a commutative group code in $\R^{2L}$, with initial vector defined by vector $\cc \in SC(d,L)_+$ \cite{SIQ, Torezzan}. 

In addition, let $B$ and $B_1$ be the generator matrices of the lattices $\tilde{\Lambda}$ and $\tilde{\Lambda_1}$ respectively and let $Q$ be an integer matrix such that $B_1= B Q$. Then the characterization and the set of generators of the group $\tilde{\Lambda}/\tilde{\Lambda_1}$ can be obtained from the standard Smith Normal Form of $Q$ \cite{Cohen, Torezzan}.

This strategy can be recursively applied to all  flat torus layers defined by $SC(d,L)_+$ to get a piecewise commutative TLSC in $R^{2L}$, i.e. a spherical code constituted of layers of commutative group codes.

\subsection{A $48$-dimensional TLSC from the Leech Lattice}

To illustrate the construction described above, we present next a piecewise commutative group code in $R^{48}$ with minimum distance $d=0.1$, designed from an orthogonal sublattice of the Leech Lattice. Although the number of points in this code is of order $10^{34}$ the construction is quite simple and do not need the storage of the points. In addition, there is a fashion labeling for all points in this code induced by the commutative group code in each layer.

In order to simplify the step (i) and focus our attention in step (ii), we start from a set of $24$ points in the unit sphere $S^{23}$ consisting of all permutations of the vector $$\displaystyle \cc(t) = \frac{ (t, 1, 1, \cdots, 1)}{\sqrt{23+t^2}},$$
where $t(d)>0$ is chosen in order to guarantee the desired minimum distance $d$ between any two permutation.

In this example, to design a code with minimum distance $d=0.1$, we get $t(0.1) \cong 1.35234$ and

$$
\cc(1.35234) \cong  (0.271399, 0.200688, \cdots, 0.200688) \in S^{23}
$$
Each permutation of $\cc(t)$ defines a flat torus on the surface of $S^{47} \subset \R^{48}$ which can be flattened into a $24$-dimensional hyperbox $\PP_{\cc}$. Therefore, the step (i) of our construction is done. 

The next step is to find a discrete set of points inside each hiperbox defined above. Since the $24$ hiperbox differs only by rotations (or interchange of coordinates) it is enough to solve this problem for one of those tori and we show next how to use the Leech lattice to solve that.

Consider a standard Leech Lattice rescaled by a factor \linebreak $\beta(d) = 0.10187$ to assure minimum distance $d=0.1$ in $R^{48}$, after apply the function $\Phi_{\cc(t)}$ (according Proposition  \ref{prop2}). 

Let $B$ be the generator matrix of the scaled Leech lattice $\Lambda_{24\beta}$ with minimum distance $\beta$. Since the Leech lattice contains the sublattice $4\Z^{24}$, we can determine a factor $\alpha$ such that the lattice generated by  $\alpha I$, where $I$ is the identity matrix of order $24$, is an orthogonal sublattice of $\Lambda_{24\beta}$. 

Now, let $\vv(t)$ be an integer vector defined by 
$$  \vv_i=\left\lfloor \frac{2 \pi \cc(t)_i - \beta}{\alpha} \right\rfloor.$$
Since the length of each edge of the flattened hiperbox $\PP_{\cc}$ is given by $2\pi \cc_i(t)$, each coordinate $i$ of vector $\vv$ determines the maximal number of times vector $\alpha e_i$ can be placed in each canonical direction of $\PP_{\cc}$, i.e. vector $v$ defines the largest orthogonal sublattice of $\Lambda_{24\beta}$ in $\PP_{\cc}$. In our example
$$
\vv(1.35234) = (11,8,8,\cdots,8).
$$

So the matrix $B_1 = \alpha \mbox{diag}(\vv)$ is a generator matrix of a sublattice $\displaystyle \tilde{\Lambda}_{24\beta} \subset \Lambda_{24\beta}$ which fundamental region approaches hiperbox $\PP_{\cc}$. Therefore the number $M_i(t)$ of lattice points in this hiperbox is given by the volume of the quotient between these lattices, i.e.
$$
M_i(t) = \dfrac{det(B_1)}{det(B)}.
$$
In this example $M_i(1.35234) = 4.46213 \times 10^{32}$. Since all $24$ torus layers are symmetric, the total number of point in the $48$-dimensional spherical code is given by
$$
M = 24 \times M_i(t) = 1.07091 \times 10^{34}.
$$

Since the points inside the hiperbox $\PP_{\cc}$ are defined by a quotient of Abelian groups, we can use the standard Smith Normal Form \cite{Cohen} of the matrix $Q = B^{-1} B_1$ to classify this group and to get the set of generators. 

In this example, in each layer the commutative group is isomorphic to $\mathbb{Z}_{64 }\times \mathbb{Z}_{256}\times \mathbb{Z}_{512}{}^9\times \mathbb{Z}_{11264}.$

In addition, every point $\xx_i \in TLSC(48,0.1)$, \linebreak $1 \leq i \leq 4.46213 \times 10^{32}$ in each layer of the spherical code can be generated as the orbit though a product of power of rotation matrices by a initial vector $\cc(t)$ as follows
$$
\xx_i = \left( W_1^{k_1} \, . \,   W_2^{k_2} \, . \,  W_{j}^{k_{j}} \, . \,  W_{10}^{k_{10}} \right) \, . \,  \xo,
$$
where
$$
\xo = (\cc(t)_1,0,\cc(t)_2,0, \cdots, \cc(t)_{24},0) \in S^{47} \subset \R^{48}, 
$$
$$
\begin{array}{ccccc}
0 & \leq & k_1 & \leq & 63\\
0 & \leq & k_2 & \leq & 255\\
0 & \leq & k_{j} & \leq & 511\\
0 & \leq & k_{10} & \leq & 11263\\
1 & \leq & j & \leq & 9 
\end{array}
$$
and  $W_1, W_2, W_j, W_{10}$ represent, respectively, the generators of subgroups of rotation matrix in $\mathcal{O}(48)$  (orthogonal $48 \times 48$ matrices) isomorphic to \linebreak $$\mathbb{Z}_{64 }, \mathbb{Z}_{256}, \mathbb{Z}_{512}{}^9, \mathbb{Z}_{11264}.$$

Since the group structure are the same for all $24$ layers in this $TLSC(48,0.1)$ spherical code, there is a natural labeling for all the $1.07091 \times 10^{34}$ points in this code induced by the set of permutation vectors $\Sc(L,d)_+$ and the commutative group code in each layer. It means that we are able to generated each one of the code points independently, which is a very useful property in many applications, specially channel coding and vector quantization.

\section{Bounds and density of TLSC}

\label{sec:Bounds}

\subsubsection{The grid TLSC}

In this section we derive a lower and an upper bound for the number of points in a torus layer spherical code. Both bounds depend on a code $SC(L,d)$ in $L$-dimensions. More specifically, to present the bounds we will assume that we have completed step (i) of the construction, i.e., assume we have selected and stored $k$ points in $SC(L,d)_+$.

For given $d$, we construct a  $TLSC$, by choosing $Y_{T_{\cc}}$ as a subset of  the rectangular lattice which lies in the hyperbox $\PP_{\cc}$ (\ref{para}). Let $\cc_i=(c_{i1, c_{i2}, \cdots, c_{iL}}) \in  SC(L,d)_+$ and $\uu$ be a point in the rectangular lattice is given by $\uu = \sum_{j=1}^L m_j a_{ij} \pmb{e}_j$, where $a_{ij}$ is the increment along the $j$th coordinate, $\pmb{e}_j$ is the $j$th unit canonical basis vector and $m_j$ is an integer, $j=1,2,\ldots,L$.
\begin{figure}[h!]
	\centering
		\includegraphics[width=8cm]{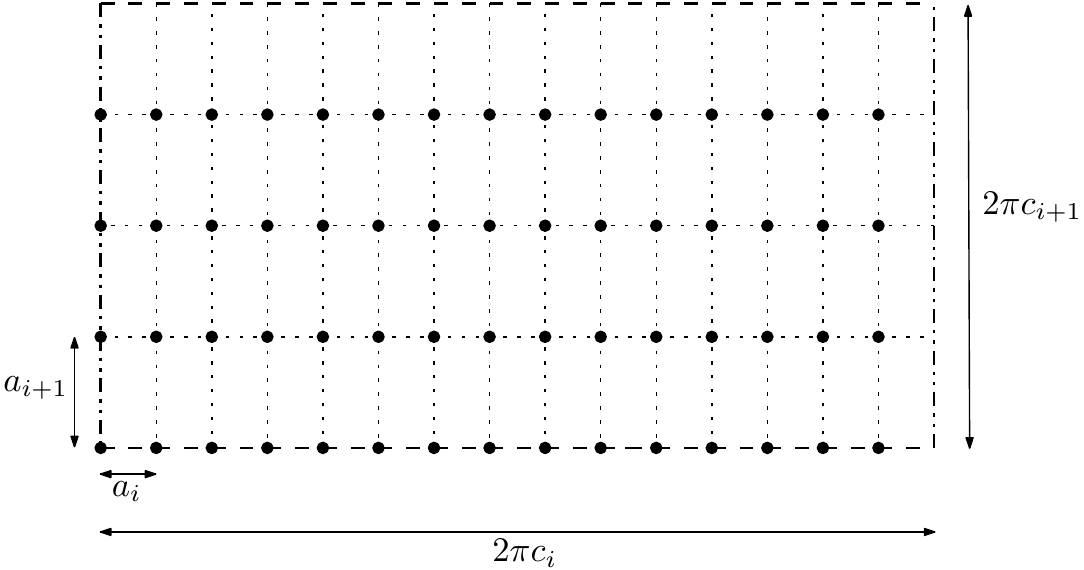}
		\caption{Illustration of a grid in a torus.}
	\label{fig:emp2}
\end{figure}
Since the hyperbox $\PP_{\cc_{i}}$ has length $2 \pi c_{ij}$ along the $j$th coordinate, we can determine the maximum number of lattice points in the hyperbox $\PP_{\cc_{i}}$ by finding the smallest $a_{ij} \in \R$ which satisfies
$$
d (\Phi_{\cc}(a_{ij}\pmb{e}_j), \Phi_{\cc}(0)) = 2 c_{ij}  \sin{\left(\frac{a_{ij}}{2c_{ij}} \right)} \geq d.
$$
This leads to the following lower bound on the number of points in the code.
%Therefore we can put $\left\lfloor \frac{\pi}{ \arcsin{\left(\frac{d}{2c_i}\right)}} \right\rfloor$ lines parallel to this edge.
%
%In doing so, for each edge of the flattened torus $T_{\cc}$, we can get a subset of the ``rectangular'' lattice, leading to the following preposition. 
\begin{prop}
For a given $SC(L,d)_+$ with $k$ points, we can design a $TLSC(2L,d)$ with $M(2L,d)$ points where 
 $$\displaystyle M(2L,d) = \sum^k_{i=1} \prod_{j=1}^L W_{ij}$$
and
$$
\displaystyle
W_{ij} = \left\{ 
\begin{array}{cc}
\left\lfloor \dfrac{\pi}{\arcsin{\dfrac{d}{2 c_{ij}}}} \right\rfloor, & \mbox{ if } \left|\dfrac{d}{2 c_{ij}}\right| \leqslant 1 \\
	1, & \mbox{ if } \left|\dfrac{d}{2 c_{ij}}\right| > 1 
\end{array}
\right.
$$
\end{prop}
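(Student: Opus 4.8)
The plan is to exhibit the rectangular grid explicitly and verify two things: that it has minimum distance at least $d$, and that it contains exactly $\sum_{i=1}^k \prod_{j=1}^L W_{ij}$ points, so that $M(2L,d)$ is attained. First I would fix the increment along the $j$th coordinate of the $i$th hyperbox as the smallest value making one step admissible, i.e. by solving $2c_{ij}\sin\!\left(a_{ij}/(2c_{ij})\right)=d$, which gives $a_{ij}=2c_{ij}\arcsin\!\left(d/(2c_{ij})\right)$ whenever $d/(2c_{ij})\le 1$, and I would retain the grid points $\uu=\sum_j m_j a_{ij}\pmb{e}_j$ lying in $\PP_{\cc_i}$. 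Points belonging to distinct tori are automatically at distance $>d$ by Proposition \ref{prop1}, since $SC(L,d)_+$ has minimum distance $>d$; hence it remains only to control the distance between two grid points on the \emph{same} torus $T_{\cc_i}$.

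For the intra-torus bound I would invoke the same-torus distance formula, writing, for $\delta_j=u_j-v_j=(m_j-m_j')a_{ij}$,
$$\|\Phi_{\cc_i}(\uu)-\Phi_{\cc_i}(\vv)\|^2 = 4\sum_{j=1}^L c_{ij}^2\sin^2\!\left(\frac{\delta_j}{2c_{ij}}\right).$$
The key point is that the product structure decouples the estimate coordinatewise: every summand is nonnegative, so the squared distance is at least the sum of the contributions of the coordinates along which $\delta_j\neq 0$. Along a single coordinate $j$ the grid points project onto a circle of radius $c_{ij}$ with chord $2c_{ij}\sin\!\left(\delta_j/(2c_{ij})\right)$, where the argument lies in $(0,\pi)$. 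Since $a_{ij}/(2c_{ij})=\arcsin\!\left(d/(2c_{ij})\right)\le\pi/2$, this chord is a unimodal function of the separation (maximal for antipodal points), so the smallest chords occur at the two extreme realized separations $a_{ij}$ and $(W_{ij}-1)a_{ij}$. Here I expect the main obstacle, namely the cyclic wrap-around on the torus: I must guarantee that the largest realized separation does not approach $2\pi c_{ij}$ so closely that its chord falls below $d$.

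The resolution is that the choice $W_{ij}=\lfloor 2\pi c_{ij}/a_{ij}\rfloor$ is exactly what forces $W_{ij}a_{ij}\le 2\pi c_{ij}$, equivalently $(W_{ij}-1)a_{ij}/(2c_{ij})\le \pi-a_{ij}/(2c_{ij})$, so by the symmetry of $\sin$ about $\pi/2$ both extreme chords are at least $2c_{ij}\sin\!\left(a_{ij}/(2c_{ij})\right)=d$; thus the single-coordinate minimum chord equals $d$. Consequently each coordinate with $\delta_j\neq 0$ contributes at least $d^2$ to the squared distance, and any two distinct grid points on $T_{\cc_i}$ are at distance at least $d$, with equality realized by a single adjacent step. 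For the count, the number of admissible integers $m_j$ subject to the wrap-around constraint is precisely $W_{ij}=\lfloor 2\pi c_{ij}/a_{ij}\rfloor$, and substituting $a_{ij}=2c_{ij}\arcsin\!\left(d/(2c_{ij})\right)$ turns this into $\left\lfloor \pi/\arcsin\!\left(d/(2c_{ij})\right)\right\rfloor$, matching the first branch of $W_{ij}$.

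Finally I would treat the degenerate branch. If $d/(2c_{ij})>1$ then $2c_{ij}<d$, so the circle of radius $c_{ij}$ has diameter strictly less than $d$ and cannot carry two points at distance $\ge d$; only one grid point is admissible along that coordinate and $W_{ij}=1$ (the same conclusion holds for a vanishing coordinate $c_{ij}=0$). Because the grid is a product set within each hyperbox and the $k$ tori form disjoint layers at mutual distance $>d$, the total number of codewords is $\sum_{i=1}^k \prod_{j=1}^L W_{ij}$, which is the claimed $M(2L,d)$.
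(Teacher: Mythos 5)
Your proof is correct and is essentially the paper's own argument: the paper gives no formal proof beyond the paragraph preceding the proposition, which defines the grid increment $a_{ij}$ as the smallest value with $2c_{ij}\sin\!\left(\frac{a_{ij}}{2c_{ij}}\right)\geq d$ and counts $\left\lfloor 2\pi c_{ij}/a_{ij}\right\rfloor$ steps per coordinate. Your explicit verification of the intra-torus minimum distance --- in particular the wrap-around check that $(W_{ij}-1)a_{ij}/(2c_{ij})\leq \pi - a_{ij}/(2c_{ij})$ keeps every realized chord at least $d$ --- supplies details the paper leaves implicit.
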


%\begin{eqnarray}
%\max \left\{ \left\lfloor \frac{\pi}{\arcsin{\frac{d}{2 \delta_{i,j}}}} \right\rfloor, 1 \right\}, & 
%\text{if } {\frac{d}{2 \delta_{i,j}}} < 1
%\end{eqnarray}

\subsubsection{An upper bound}

Suppose again that we have $k$ tori defined by a pre-selected $SC(L,d)_+$. Let $\cc_i=(c_{i1},c_{i2},\cdots,c_{ik})$ be the $i$th radius vector in $SC(L,d)_+$. Without loss of ge\-nerality we consider $c_{ij} \geq c_{ij+1} \forall i,j$. We can use the bound presented in \cite{SIQ}[Proposition 7] to obtain the following upper bound for the number of points $M_{T_{\cc_i}}$ in each torus $T_{\cc_i}$:
$$\displaystyle M_{T_{c_i}} \leq \left\lfloor  \frac{\pi^L}{(\arcsin{\frac{d}{4}})^L} \prod_{j=1}^L c_{ij} \Lambda_{L} \right\rfloor$$ where $\Lambda_{L}$ is
the maximum center density of a packing in $\R^{L}$.\\
However, there is an additional consideration. In case of the smallest coordinate of $\cc_i$ is small enough we may get
$$ \left\lfloor  \frac{\pi^L}{(\arcsin{\frac{d}{4}})^L} \prod_{j=1}^L (c_{ij}) \Lambda_{L} \right\rfloor = 0 .$$ 
In this case, we should remove the smallest coordinate of $\cc_i$, to project the torus $T_{\cc_j}$ into one less dimension and consider a code in that dimension. This process of projection should be carried out until a non zero value for $M_{T_i}$ is found. This is equivalent to place the points in a face of the hiperbox $\PP_{\cc_i}$. We remark that, in the worst case all but one $c_{ij}$ is zero and the torus degenerates to a circle where at least $$\max \left\{ \left\lfloor \frac{\pi}{\arcsin{\frac{d}{2 c_{i1}}}} \right\rfloor, 1 \right\}$$ points can be placed.

Let $M_{T_{c_i}}^p$, for some $1 \leq p \leq L$, be the number of point that fit in the $p$-face of the hiperbox $\PP_{\cc_i}$, ($p=1$ corresponding to the one-dimensional degenerated torus obtained when just the first coordinate $\cc_{i1}$ is non zero an $p=L$ corresponding to the $L$-dimensional flat torus obtained when all $\cc_{ij}$ are non zero. Thus
$$
M_{T_{c_i}}^p =  \left\lfloor  \frac{\pi^p}{(\arcsin{\frac{d}{4}})^p} \prod_{j=1}^p (c_{ij}) \Lambda_{p} \right\rfloor.
$$

Therefore the maximum number of points in each tori is given by
$$
M_{T_{c_i}}^* =\displaystyle \max_{1 \leq p \leq k} M_{T_{c_i}}^p.
$$
This allow us to dirive an upper bound for the number of points $M(2L,d)$ of a $TLSC(2L,d)$.

\begin{prop}
Given a $SC(L,d)_+$ with $k$ points, the number of points in a $TLSC(2L,d)$ satisfies
$$
M(2k,d) \leq \sum_{i=1}^{k}   M^*_{T_{c_i}}  
$$
\end{prop}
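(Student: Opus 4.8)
The plan is to exploit the fact that, by construction, a $TLSC(2L,d)$ is a \emph{disjoint} union of codebooks, one on each flat torus $T_{\cc_i}$ determined by the $k$ radius vectors $\cc_i \in SC(L,d)_+$. First I would record two structural facts. The tori are pairwise disjoint, since every point of $S^{2L-1}$ lies on exactly one torus of the foliation; and because $SC(L,d)_+$ has minimum distance exceeding $d$, Proposition~\ref{prop1} gives $d(T_{\cc_i},T_{\cc_{i'}})=\|\cc_i-\cc_{i'}\| > d$ for $i \neq i'$. Hence the global minimum-distance constraint decouples across layers: points on different tori automatically satisfy it, so the only binding constraint is within each layer, and the total count is exactly the sum of the per-layer counts,
$$
M(2L,d) = \sum_{i=1}^{k} M_{T_{\cc_i}},
$$
where $M_{T_{\cc_i}}$ denotes the number of code points placed on $T_{\cc_i}$. (The ``$M(2k,d)$'' in the statement is a typo for $M(2L,d)$.)

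The heart of the argument is then the single-layer estimate $M_{T_{\cc_i}} \leq M^*_{T_{c_i}}$. For a non-degenerate layer I would proceed as follows. The torus $T_{\cc_i}$ is flat and $\Phi_{\cc_i}$ is a local isometry onto it from the hyperbox $\PP_{\cc_i}$, so the code points pull back to points in $\PP_{\cc_i}$ carrying the flat-torus (periodic) metric. By Proposition~\ref{prop2} the Euclidean distance $\delta$ between two images is at most the pre-image distance $\Delta$, so the requirement $\delta \geq d$ forces $\Delta \geq d$; the pre-images therefore constitute a packing of minimum distance $d$ in an $L$-dimensional region of volume $\prod_{j=1}^{L}(2\pi c_{ij})$. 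This is precisely the configuration to which the density bound of \cite{SIQ}[Proposition~7] applies, and with $\Lambda_L$ the maximal center density in $\R^L$ it yields
$$
M_{T_{\cc_i}} \leq \left\lfloor \frac{\pi^L}{(\arcsin{\frac{d}{4}})^L} \prod_{j=1}^{L} c_{ij}\, \Lambda_L \right\rfloor = M^{L}_{T_{c_i}}.
$$

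The main obstacle, and the reason for the maximum over faces in the definition of $M^*_{T_{c_i}}$, is the degenerate regime in which the smallest coordinates $c_{ij}$ are so small that the full-dimensional bound collapses to zero while points can still be placed. There the optimal codebook effectively lives on a lower-dimensional face of $\PP_{\cc_i}$: discarding the smallest coordinate projects the torus to an $(L-1)$-dimensional flat torus, the same packing estimate applies in dimension $L-1$, and so on down to the circle $p=1$. I would make this rigorous by showing that any admissible placement, after projecting away the coordinates that contribute no separation, is bounded by $M^{p}_{T_{c_i}}$ for the appropriate effective dimension $p$, so that taking the maximum over $1 \le p \le L$ produces a valid upper bound $M^*_{T_{c_i}}=\max_p M^{p}_{T_{c_i}}$ regardless of which face is active. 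The delicate point is confirming that the face-projection can only \emph{increase} the admissible count relative to the exact optimum, so that the maximum is a genuine ceiling and not merely the size of one particular construction.

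Finally, summing the per-layer bound $M_{T_{\cc_i}} \le M^*_{T_{c_i}}$ over the $k$ tori gives
$$
M(2L,d) = \sum_{i=1}^{k} M_{T_{\cc_i}} \le \sum_{i=1}^{k} M^*_{T_{c_i}},
$$
which is the claimed inequality.
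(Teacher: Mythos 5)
Your proposal is correct and follows essentially the same route as the paper, which gives no formal proof but whose surrounding discussion is exactly your argument: decompose the code as a disjoint union over the $k$ tori, bound each layer via \cite{SIQ}[Proposition 7], handle the degenerate small-coordinate case by projecting onto faces of $\PP_{\cc_i}$ and taking the maximum over $p$, then sum. You are in fact slightly more careful than the paper in flagging that the maximum over faces must be justified as a genuine upper bound rather than the size of one construction, a point the paper asserts without argument.
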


Table \ref{tab:adf} shows a comparison between these bounds and a $TLSC(4,d)$ designed in the Section \ref{TLSC4}. Note the tightness of the upper bound when the distance decreases.

\begin{table}[htb]
	\small
	\centering
			\begin{tabular}{|c|c|c|c|}
 \hline d & \text{TLSC(4,d)}  & \text{grid lower bound} & \text{upper bound}\\
\hline
  \hline 0.5   & 172 & 120 & 194  \\
  \hline 0.4   & 308 & 208 & 360 \\
  \hline 0.3   & 798 & 612 & 826 \\
  \hline 0.2   & 2718 & 2148& 2854\\
  \hline 0.1   & 22,406 & 18,884 & 22,478\\
  \hline 0.01  & 2.279 $\times 10^{7}$ & 1.967 $\times 10^{7}$ & 2.279$\times 10^{7}$ \\
 \hline
		\end{tabular}
 	\caption{Bounds for 4-dimensional torus layer spherical codes at various minimum distances}
	\label{tab:adf}
\end{table}

%\subsection{TLSC in odd dimensions}

\subsection{Density of TLSC}

In this section we analyze the density of the torus layer spherical codes. 

Let $\Gamma$ be the standard Gamma Function, 
$$
\Gamma(x) = \int_0^{\infty} e^{-t} t^{x-1} \, \mathrm{d}t.
$$
We shall use 
$$\Su_L := \frac{L \pi^{L/2}}{\Gamma((L/2)+1)}$$
 for the $(L-1)$-dimensional volume (surface area) of the sphere $S^{L-1} \subset R^L$ and 
 $$\Vo_L := \frac{\pi^{L/2}} {\Gamma((L/2)+1)}$$
  for the $L$-dimensional volume of the ball bounded by $S^{L-1}$. 
  
We also use $$\Sc(\theta/2, L):= \Su_{L-1} \int_0^{\theta/2}\sin^{L-2}x dx$$
   for the $(L-1)$-dimensional volume of a spherical cap on the sphere $S^{2L-1}$ with angular radius $\theta = 2 \arcsin(d/2)$. By using the Taylor series of $\sin(x)$ and some standard calculations (see e.g. \cite{Hamkins1}) we can obtain  
   $$\Sc(\theta/2, L)=\Vo_{L-1} \left(\frac{d}{2}\right)^{L-1} + O(d^{L+1}).$$

The density of a $L$-dimensional spherical code with minimum distance $d$ and $M$ codewords is proportion of the area of $S^{L-1}$ occupied by the union of the spherical caps centered at the codewords and with angular radius $\theta = 2 \arcsin(d/2)$, that is,
$$
\Delta_{SC} = \frac{ \Sc(d/2, L) \ \ M }{\Su_{L}}.
$$

For a given minimum distance $d$, the maximum cardinality of a $L$-dimensional spherical code is unknown for all $L \geq 3$, except for a handful of values of $d$ \cite{zino}, therefore the problem of determine the maximum density of a $L$-dimensional spherical code is approached through bounds. 

Next proposition approaches the density of a torus layer spherical code, for asymptotically small $d$.

In what follow we denote $f(d) \simeq g(d)$ if $$\displaystyle \lim_{d\rightarrow 0} \frac{f(d)}{g(d)} =1.$$

\begin{prop}
\label{prop_dens}
The torus layer spherical code density $\Delta_{TLSC}$ is upper bounded and asymptotically approach the density of $\Delta_{\Lambda_L \times \Lambda_{L-1}}$, where $\Lambda_n$ is the densest lattice in $R^n$.
\end{prop}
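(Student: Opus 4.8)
The plan is to evaluate the limit, as $d\to 0$, of the spherical density
$$
\Delta_{TLSC}=\frac{M\,\Sc(d/2,2L)}{\Su_{2L}},
$$
where $M=\sum_{i=1}^{k}M_{T_{\cc_i}}$ is the total number of codewords and, by the same Taylor expansion already used for $\Sc$, the cap on $S^{2L-1}$ obeys $\Sc(d/2,2L)\simeq\Vo_{2L-1}(d/2)^{2L-1}$. The geometric principle behind the computation is that the foliation exhibits $S^{2L-1}$ as a Riemannian submersion over $S^{L-1}_{+}$ whose fibres are the flat tori $T_{\cc}$: the fibre directions are orthogonal to the directions in which the radius vector $\cc$ varies, and on each fibre $\Phi_{\cc}$ is a local isometry. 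Hence, at small scales, the code near any codeword is an \emph{orthogonal product} of an $L$-dimensional packing inside the torus and an $(L-1)$-dimensional packing transverse to it, i.e. the local model is the product lattice $\Lambda_{L}\times\Lambda_{L-1}$ in $\R^{2L-1}$; the whole proof is the verification that this local model pins down the global density.

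First I would count the points layer by layer. By Proposition \ref{prop2} the map $\Phi_{\cc}$ has distortion $1+O(d^{2})$ on vectors of length $\approx d$, so filling $\PP_{\cc_i}$ with the densest lattice $\Lambda_{L}$ scaled to minimum distance $d$ (determinant $\det\Lambda_{L}^{(d)}=\Vo_{L}(d/2)^{L}/\Delta_{\Lambda_{L}}$) gives, up to a lower-order $O(d^{-(L-1)})$ boundary term,
$$
M_{T_{\cc_i}}\simeq\frac{\mathrm{vol}(\PP_{\cc_i})}{\det\Lambda_{L}^{(d)}}=\frac{(2\pi)^{L}\prod_{j}c_{ij}\,\Delta_{\Lambda_{L}}}{\Vo_{L}(d/2)^{L}}.
$$
For step (i), a densest packing of $S^{L-1}_{+}$ places $\Delta_{\Lambda_{L-1}}/\bigl(\Vo_{L-1}(d/2)^{L-1}\bigr)$ points per unit surface area, so the layer sum becomes a Riemann sum converging to an integral,
$$
\sum_{i=1}^{k}\prod_{j}c_{ij}\simeq\frac{\Delta_{\Lambda_{L-1}}}{\Vo_{L-1}(d/2)^{L-1}}\,I,\qquad I:=\int_{S^{L-1}_{+}}\prod_{j}c_{j}\,dA(\cc).
$$
Multiplying the two displays yields $M$, and every power of $d$ produced here cancels against the $(d/2)^{2L-1}$ in $\Sc(d/2,2L)$.

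The constant is then identified by the identity
$$
\Su_{2L}=(2\pi)^{L}I .
$$
I would prove this by computing $\int_{\R^{2L}}f(\|z\|)\,dz$ twice: directly it equals $\Su_{2L}\int_{0}^{\infty}f(r)r^{2L-1}\,dr$, while writing each coordinate pair as $c_{j}e^{i\theta_{j}}$ gives $(2\pi)^{L}$ from the angles and $\prod_{j}c_{j}$ from the $L$ planar area elements, and polar coordinates $c_{j}=r\tilde c_{j}$ on the positive orthant of $\R^{L}$ turn this into $(2\pi)^{L}I\int_{0}^{\infty}f(r)r^{2L-1}\,dr$; comparing the two gives the claim. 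Substituting $\Su_{2L}=(2\pi)^{L}I$ into $\Delta_{TLSC}$ collapses all constants to
$$
\Delta_{TLSC}\simeq\frac{\Vo_{2L-1}}{\Vo_{L}\,\Vo_{L-1}}\,\Delta_{\Lambda_{L}}\Delta_{\Lambda_{L-1}}=\Delta_{\Lambda_{L}\times\Lambda_{L-1}},
$$
the last step being the elementary density of a product lattice scaled to minimum distance $d$. The upper bound follows from the same chain read as inequalities: no packing of $\PP_{\cc}$ beats $\Delta_{\Lambda_{L}}$ and no packing of $S^{L-1}_{+}$ asymptotically beats $\Delta_{\Lambda_{L-1}}$.

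The hard part is not the algebra but the two limiting steps. Replacing $M_{T_{\cc_i}}$ by $\mathrm{vol}(\PP_{\cc_i})/\det\Lambda_{L}^{(d)}$ needs a lattice-point estimate in a fixed $O(1)$-box with spacing $O(d)$, with the Proposition \ref{prop2} distortion folded in; the relative error is $O(d)$. More delicate is turning $\sum_{i}\prod_{j}c_{ij}$ into $I$: this presupposes that a good step-(i) code equidistributes on $S^{L-1}_{+}$ and attains the planar density $\Delta_{\Lambda_{L-1}}$ as the caps shrink — precisely the statement that spherical packing density tends to flat lattice density — and that the degenerate tori near $\partial S^{L-1}_{+}$, where some $c_{j}\to0$, contribute only lower order. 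Making these equidistribution and boundary estimates rigorous is where the real work lies; once granted, the cancellation of $d$-powers and the identity $\Su_{2L}=(2\pi)^{L}I$ are bookkeeping.
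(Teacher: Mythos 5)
Your proposal is correct and follows essentially the same route as the paper: estimate the points per layer by the densest $L$-dimensional lattice scaled to minimum distance $d$, estimate the number of layers by the densest $(L-1)$-dimensional packing of $S^{L-1}_{+}$, and combine to recover the product-lattice density $\frac{\Vo_{2L-1}}{\Vo_L\Vo_{L-1}}\Delta_{\Lambda_L}\Delta_{\Lambda_{L-1}}$. The only substantive difference is that you make explicit (and prove, via the two-way radial integral) the identity $\Su_{2L}=(2\pi)^{L}\int_{S^{L-1}_{+}}\prod_j c_j\,dA$, which the paper uses only implicitly through its heuristic volume element $\textit{d}\mathcal{V}=\Sc(d/2,L-1)/\Delta_{\Lambda_{L-1}}$, and that you candidly flag the equidistribution and boundary-degeneration estimates that both arguments leave informal.
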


\begin{proof} 

The torus layer spherical code density is given by
\begin{equation}
\label{eq:tlsc}
	\Delta_{TLSC} = \frac{ \Sc(d/2, 2L) \ \ M }{\Su_{2L}},
\end{equation}
where $\displaystyle M = \sum_{i=1}^k {M_i}$ is the total number of codewords, $M_i$ is the number of codewords in the $i$-th torus layer and $k$ is the total number of layers on which the code lays on.

When the distance become small, the number of points in each layer can be approached by considering the best $L$-dimensional lattice packing \cite{SIQ}
$$\displaystyle M_i \simeq \frac{\Delta_{\Lambda_L} \prod_{j=1}^L (2\pi c_{ij}) }{(d/2)^L\Vo_L},$$
and therefore
$$
\displaystyle M  \simeq \sum_{i=1}^k { \frac{\prod_{j=1}^L 2\pi c_{ij} \Delta_{\Lambda_L}}{(d/2)^L\Vo_L}}
$$
$$
\displaystyle M  \simeq \frac{\Delta_{\Lambda_L}}{(d/2)^L\Vo_L} \sum_{i=1}^k { \prod_{j=1}^L 2\pi c_{ij}}
$$
Since the sphere $S^{2L-1}$ can be foliated by flat tori, we may assert
$$
\Su_{2L} \simeq \sum_{i=1}^k \prod_{j=1}^L (2\pi c_{ij}) \textit{d}\mathcal{V},
$$
where the element of $(L-1)$-volume \textit{d}$\mathcal{V}$ is the volume of the ``positive'' part of the sphere $S^{L-1}$ divided by the number of tori,
$$
\textit{d}\mathcal{V} \simeq \frac{\Su_{L}}{2^L \left( \frac{\Su_{L} ( \Delta_{\Lambda_{L-1}} ) }{2^L \Sc(d/2,L-1)} \right)} = \frac{\Sc(d/2,L-1)}{\Delta_{\Lambda_{L-1}}}.
$$
Therefore we may assert
$$
\sum_{i=1}^k \prod_{j=1}^L (2\pi c_{ij}) \simeq \frac{\Su_{2L}}{\textit{d}\mathcal{V}},
$$
and the number of codewords can be estimated by 
$$
M \simeq \frac{\Delta_{\Lambda_L}}{(d/2)^L\Vo_L} \frac{\Su_{2L}}{\frac{\Sc(d/2,L-1)}{\Delta_{\Lambda_{L-1}}}} = \frac{\Delta_{\Lambda_L} \Delta_{\Lambda_{L-1}} \Su_{2L}}{(d/2)^L \Vo_L \Sc(d/2,L-1)} =
$$
$$
= \frac{ \frac{V_L}{\det{\Lambda_L}} \frac{V_{L-1}}{\det{\Lambda_{L-1}}} 2L \Vo_{2L} }{(d/2)^L \Vo_L (d/2)^{L-1} V_{L-1}}  = \frac{2L \Vo_{2L}}{(d/2)^{2L-1} \det{\Lambda_{L}} \det{\Lambda_{L-1}} }
$$
Thus, from (\ref{eq:tlsc}), when $d \rightarrow 0$ we get
$$
\Delta_{TLSC} \simeq \frac{ \Sc(d/2, 2L) \frac{2L \Vo_{2L}}{(d/2)^{2L-1} \det{\Lambda_{L}} \det{\Lambda_{L-1}} } } {\Su_{2L}}
$$
and so
$$
\Delta_{TLSC} \simeq \frac{\Vo_{2L-1}}{\det{\Lambda_{L}} \det{\Lambda_{L-1}}},
$$
which is the density of the Cartesian lattice $\Lambda_L \times \Lambda_{L-1}$.
\end{proof} 

It should be remarked here that this asymptotic density is much better than the asymptotic density of apple peeling  \cite{Gamal} construction but certainly worst than the best lattice packing density in $R^{2L-1}$, which can be achieved by the wrapped  \cite{Hamkins1} and laminated  \cite{Hamkins2} codes. On the other hand, for not that small $d$, as we have seen in Sec. \ref{TLSC4} a TLSC can outperform these previous constructions besides having the mentioned features inherit from its group structure.

\section{Decoding}
\label{sec:Decoding}

Given an arbitrary $x \in \R^{n}$ and a $n$-dimensional spherical code $\Sc$, the maximum-likelihood decoding problem is to find
\begin{equation}
	\displaystyle y = \arg \min_{y_i \in \Sc} ||x - y_i||.
	\label{decod}
\end{equation}

For any $x \in \R^n$ and a spherical code $\Sc$
$$
\arg \min_{y \in \Sc} \left\| x - y \right\| = \arg \min_{y \in \Sc} \left\| \dfrac{x}{||x||} - y \right\| .
$$

In fact, let $y \in \Sc$, such that  $$\left\| \dfrac{x}{||x||} - y \right\| \leq \left\| \dfrac{x}{||x||} - z \right\| \ \ \forall \ \ z \in \Sc.$$
Thus,
$$2 - 2 \left\langle  \dfrac{x}{||x||}, y   \right\rangle \leq 2 - 2 \left\langle  \dfrac{x}{||x||}, z   \right\rangle \Rightarrow \left\langle x, y\right\rangle \geq \left\langle x, z\right\rangle,$$
and
$$\left\| x - y \right\| = ||x||^2 +1 -2 \left\langle x,y \right\rangle  \leq ||x||^2 + 1 - 2 \left\langle x,z\right\rangle = \left\| x - z \right\|.$$

Therefore, for decoding any received vector $x$, using a spherical code, we can assume $x$ is a unit vector.

One approach to solve (\ref{decod}) is computing the all the inner products between $x$ and $y_i \in \Sc$ and search for 
$$\displaystyle y = \arg \max_{y_i \in \Sc}\left\langle x, y_i\right\rangle.$$

This process requires $O(Mn+M)$ flops, but the main problem here is that this approach requires the storage of all codebook in the decoder, which is a restrictive requirement for many applications with limited memory. In addition, mostly computations required in the decoding process are done in the half of the code dimension. 

In what follows we address on decoding in a $TLSC$.

For any $2L$-dimensional unit vector $x$ we can write 
{\small
\begin{eqnarray}
x & = &  \left( \gamma_1 \left( \frac{x_1}{\gamma_1}, \frac{x_2}{\gamma_1} \right), \hdots , \gamma_L \left( \frac{x_{2L-1}}{\gamma_L}, \frac{x_{2L}}{\gamma_L} \right)
 \right) \nonumber \\
%x & = &  \left( \sqrt{x_1^2+x_2^2}\left(\frac{x_1}{\sqrt{x_1^2+x_2^2}}, \frac{x_2}{\sqrt{x_1^2+x_2^2}} \right), \hdots ,
% \right) \\
 %x & = &  \left( \sqrt{x_1^2+x_2^2}\left(\frac{x_1}{\sqrt{x_1^2+x_2^2}}, \frac{x_2}{\sqrt{x_1^2+x_2^2}} \right), \hdots ,
%\sqrt{x_{2k-1}^2+x_{k}^2}\left(\frac{x_{2k-1}}{\sqrt{x_{2k-1}^2+x_k^2}}, \frac{x_k}{\sqrt{x_{2k-1}^2+x_k^2}} \right)
% \right)\nonumber \\
x & = &  \left( \gamma_1\left(\cos{\frac{\theta_{1}}{\gamma_1}}, \sin{\frac{\theta_{1}}{\gamma_1}} \right), \hdots ,
\gamma_L \left(\cos{\frac{\theta_{L}}{\gamma_L}},\sin{ \frac{\theta_{L}}{\gamma_L}} \right) \right). \nonumber 
\end{eqnarray}
}
Where,
 \begin{eqnarray}
 \gamma_i   & = & \sqrt{x_{2i-1}^2+x_{2i}^2}, \ \ 1 \leq i \leq L \nonumber \\
 \theta_{i} & = & \arccos{ \left( \dfrac{x_{2i-1}}{\gamma_i} \right) } \gamma_i,  \ \ 1 \leq i \leq L \nonumber.
 \end{eqnarray}
This means that $x$ belongs to a flat torus of radius \linebreak $c_{x}=(\gamma_1,\gamma_2, \cdots, \gamma_L).$ In general, $c_x \notin \Sc(L,d)_+$, i. e. $c_x$ does not defines a layer in the spherical code and we must project $x$ in the closest layer. 

This process involves a spherical decoding in $L$-dimension, considering just the points in $\Sc(L,d)_+$ which defines the layers of tori in $TLSC$. As the number of tori is, in general, much smaller of code's cardinality, it does not increase the complexity of the entire process.

For any $c_i = (c_{i1}, c_{i2}, \cdots, c_{iL}) \in \Sc(L,d)_+$ the vector
$$\bar{x}_i = \left( c_{i1}\left(\cos{\frac{\theta_{1}}{\gamma_1}}, \sin{\frac{\theta_{1}}{\gamma_1}} \right), \hdots ,
c_{iL} \left(\cos{\frac{\theta_{L}}{\gamma_L}},\sin{ \frac{\theta_{L}}{\gamma_L}} \right) \right)$$ is the projection of $x$ in the torus $T_{c_i}$, i.e., 
$$
||x-\bar{x}_i|| \leq || x - y || \,  \forall \, y \in T_{c_i}.
$$
Let $T_{c_\xi}$ be the closest torus to $x$. With high probability, the solution of (\ref{decod}) belongs to the torus $T_{c_{\xi}}$, and can be found by decoding the vector 
$$
z_{\xi} = \psi^{-1}_{c_{\xi}}(\bar{x}_{\xi}) = \left( \frac{\theta_1 c_{\xi_1}}{\gamma_1} ,  \frac{\theta_2 c_{\xi_2}}{\gamma_2}, \cdots, \frac{\theta_L c_{\xi_k}}{\gamma_L} \right)
$$
in the $L$-dimensional hyperbox $\PP_{c_{\xi}}$ using an efficient algorithm in the half of the code's dimension, which depends on the structure of the points in $\PP_{c_{\xi}}$. For instance, for codes designed in previous section, the decoding in $\PP_{c_{\xi}}$ requires $O(L)$ flops and does not need to store the codebook \cite{ConwaySloane}.

% and $\Delta_i = ||x-\bar{x}_i||$ be the smallest distance between $x$ and the torus $T_{c_i}$

For most applications, we can conclude the decoding process assuming a suboptimal solution. We can also apply an additional step to get a maximum-likelihood decoding as follows.

Let $w_{\xi} \in \R^L$ be the closest point to $z_{\xi}$ in $\PP_{c_{\xi}}$ and \linebreak $y_{\xi} = \psi_{c_{\xi}}(w_{\xi})$ be its image in $S^{2L-1}$. If $d_{\xi} = ||y_{\xi} - x|| < \dfrac{d}{2}$, the maximum-likelihood decoding is over and $y_i$ is the solution for (\ref{decod}). 

If $d_{\xi} > \frac{d}{2}$, there might exist another $w$ in some other torus $T_{c_i}$ such that 
$$
||w-x||<||y_i-x||
$$

Let us define precisely what tori must be checked.

Let $ \NN = (\xi_i, \xi_2, \cdots, \xi_j)$, the set of tori for which 
$$
\Delta_i = ||x-\bar{x}_i|| < d_{\xi},
$$
We will assume that $\Delta_{\xi_i} \leq \Delta_{\xi_{i+1}} \ \ \forall i=1,2,...,j$.

Thus, we need to decode iteratively $x$ in the torus defined in $\NN$, getting a set of candidates $Y = \{y_{\xi}, y_{\xi_1}, y_{\xi_2}, \cdots, y_{\xi_j} \}$, $Y \subset TLSC(2k,d)$.

The output of decoding will be the point $y^* \in Y$ which satisfies 
$$
||y^* - x|| \leq ||y - x|| \ \ \forall y \in Y.
$$
In order to accelerate this process, each value of $d_{\xi}$, obtained iteratively, can be used to reduce the set $\NN$. Figure \ref{decodificacao} illustrates the decoding process in a $TLSC(2L,d)$. Each circle represents a torus $T_{c}$ in the code. In this example, just the tori $T_{c_{\xi}}$ and $T_{c_w}$ must be checked.

\begin{figure}[h!]
	\centering
		\includegraphics[width=8cm]{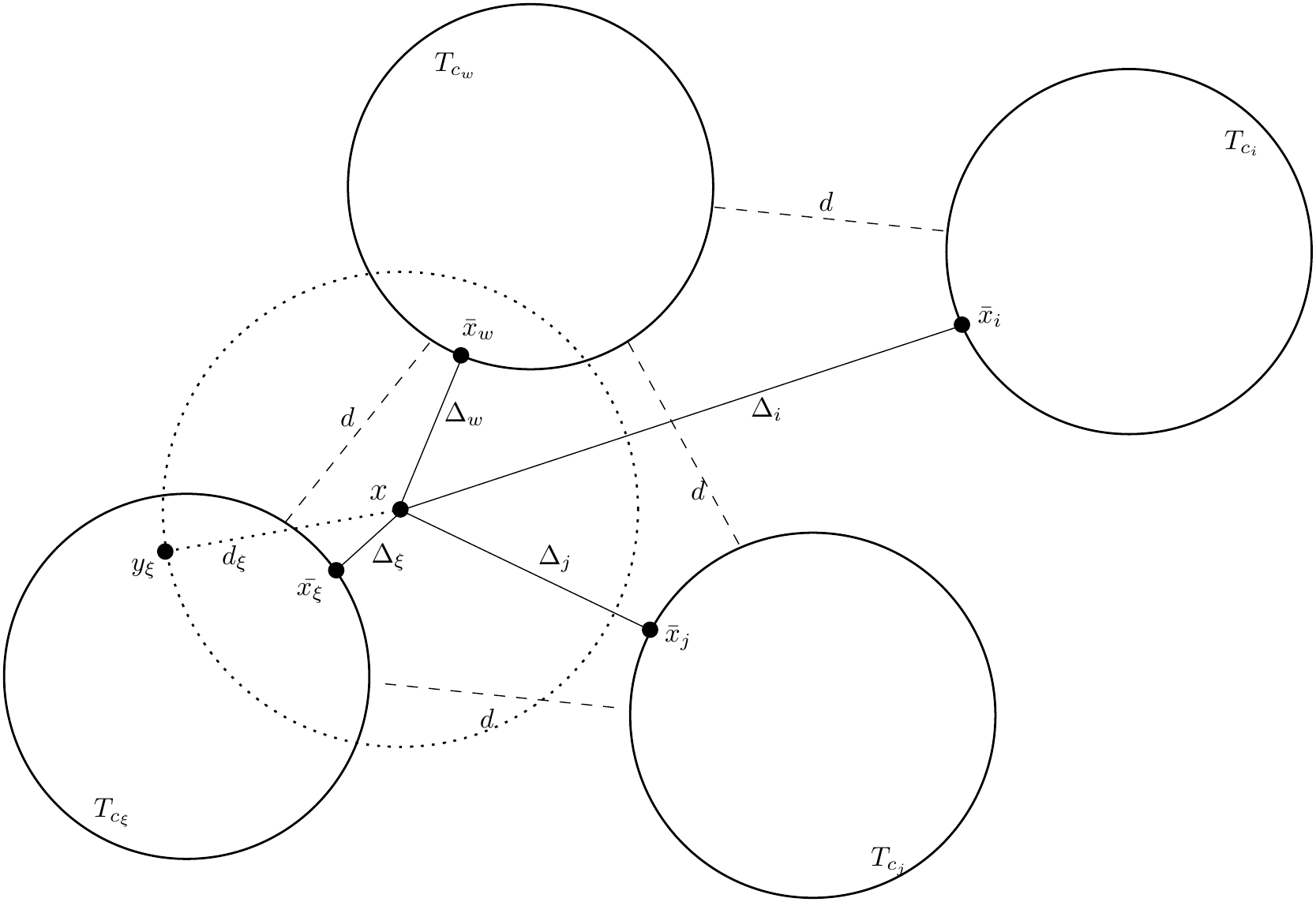}
		\caption{Decoding process in a $TLSC(2L,d)$.}
	\label{decodificacao}
\end{figure}

The computational complexity in a $2L$-dimensional spherical codes constructed in layers of flat tori is dominated by the complexity of decoding in a $L$-dimensional hyperbox.

\section{Conclusion}
We propose a new construction of spherical codes based on the foliation of the unit sphere in even dimensions by flat tori. Given a minimum distance $d$, the first step in this construction is to select torus layers which have minimum distance $d$. A codebook is then constructed in each layer by choosing a set of points in a hyperbox in half the code dimension. These points can be selected as cosets of a dense lattice in $R^L$ inducing a structured spherical code in $R^{48}$ which can be easily labeled and is generated by a commutative group of rotation matrix in each layer. The performance of these torus layer spherical codes is good when compared to the well-known wrapped spherical codes \cite{Hamkins1}, laminated spherical codes \cite{Hamkins2} and apple-peeling codes \cite{Gamal} for not asymptotically small distances. Concerning the coding and decoding process the main advantage comes from their homogeneous structure and the underlying lattice codebook in the half the code dimension.

% conference papers do not normally have an appendix

% use section* for acknowledgement
\section*{Acknowledgment}
%C. Torezzan and S.I.R.Costa were partially supported by FAPESP (05/58102-7, 02/07473-7) and CNPq (304573/2002). 
The authors wish to thank N. J. A. Sloane for fruitful conversations and to AT\&T Shannon Laboratory where part of this work was done.

\ifCLASSOPTIONcaptionsoff
  \newpage
\fi

\end{document}